\documentclass[conference]{IEEEtran}
\IEEEoverridecommandlockouts
\usepackage{cite}
\usepackage{amsmath,amssymb,amsfonts}
\usepackage{algorithmic}
\usepackage{graphicx}
\usepackage{textcomp}
\usepackage{xcolor}
\usepackage{placeins}
\usepackage{braket}
\usepackage{amsthm}
\newtheorem{theorem}{Theorem}
\usepackage[caption=false,font=footnotesize]{subfig}
\def\BibTeX{{\rm B\kern-.05em{\sc i\kern-.025em b}\kern-.08em
    T\kern-.1667em\lower.7ex\hbox{E}\kern-.125emX}}
\begin{document}

\title{Entanglement Rate Maximization for Dual-Connectivity Wireless Quantum Networks
\\
{\footnotesize}

}
\author{
\IEEEauthorblockN{
Kavini Thenuwara, Shiva Kazemi Taskooh, and Ekram Hossain
}
\IEEEauthorblockA{
Department of Electrical and Computer Engineering,
University of Manitoba, Canada\\
Emails: thenuwak@myumanitoba.ca, shiva.kazemitaskooh@umanitoba.ca, ekram.hossain@umanitoba.ca
}
}

\maketitle

\begin{abstract}
The development of quantum networks (QNs) relies on efficient mechanisms for distributing entanglement among multiple quantum users (QUs) under practical system constraints. This paper investigates the problem of entanglement rate maximization in a dual‑connectivity (DC) wireless quantum network comprising multiple quantum base stations (QBSs). Under the DC architecture, each QU can associate with up to two QBSs, thereby enhancing resource utilization compared to conventional single‑connectivity (SC) schemes. The joint QBS–QU association and entanglement generation rate allocation problem is formulated as a mixed‑integer nonlinear programming problem that incorporates practical constraints, including limited QBS entanglement generation capacity as well as heterogeneous minimum entanglement rate demands  and fidelity requirements for QUs. To efficiently solve this challenging problem, an alternating optimization (AO) algorithm is developed, which decomposes the original formulation into entanglement rate allocation and association subproblems. Simulation results demonstrate that the proposed DC architecture significantly outperforms SC schemes, while the AO algorithm achieves near‑optimal performance with substantially reduced computational complexity.
\end{abstract}

\begin{IEEEkeywords}
 Dual-connectivity, entanglement distribution, free-space optics, quantum networks, user association.
\end{IEEEkeywords}

\section{Introduction}
Recent advances in quantum networks (QNs) have intensified the demand for communication technologies capable of interconnecting multiple quantum devices. Quantum communication (QC) enables the exchange of quantum information among such devices and constitutes a fundamental building block for future QNs and the envisioned quantum internet \cite{dutta2024quantum}. Beyond connectivity, QC provides fundamental advantages over classical communication, including enhanced security guarantees and improved information-theoretic performance, positioning QNs as a key enabler of next-generation communication systems \cite{zhou2025towards}. These networks support a wide range of applications such as quantum key distribution (QKD), distributed quantum computing, quantum sensing, and quantum teleportation. At the core of many quantum applications lies quantum entanglement, a uniquely quantum resource in which the states of two or more particles become intrinsically correlated irrespective of the distance separating them, enabling non-classical information sharing between remote nodes \cite{einstein1935can}. Consequently, the ability to efficiently distribute entanglement across distant nodes is a central challenge in the design and operation of large-scale QNs.

Entanglement distribution in QNs can be realized using either optical fiber links or free-space optical (FSO) communication channels \cite{djordjevic2020global}. Most existing quantum network architectures predominantly rely on optical fiber to transmit and share quantum information between network nodes \cite{chehimi2025entanglement,chehimi2021entanglement,panahi2025energy}. However, the deployment of fiber infrastructure can be challenging or impractical in regions characterized by harsh terrain, physical obstacles, or limited communication infrastructure. In such scenarios, FSO quantum channels offer a flexible and cost-efficient alternative for enabling quantum communication applications \cite{pirandola2021limits}. In FSO-based systems, quantum optical signals are transmitted wirelessly through the atmosphere between terrestrial stations or via satellite links in space. As a result, FSO-based quantum networks provide greater deployment flexibility than their fiber-based counterparts, while also enabling high-capacity optical transmission by operating over a broad range of wavelengths and frequencies \cite{chehimi2025reconfigurable}.

Several studies have investigated entanglement distribution in QNs, mainly in satellite-based scenarios, where long-distance FSO quantum communication is enabled through the generation of entangled photon pairs. For instance, the authors in \cite{panigrahy2022optimal} have focused on optimizing satellite-to-ground transmissions to maximize entanglement distribution rates, while \cite{wei2024optimal} has improved entanglement distribution strategies for efficient quantum communication. In addition, the authors of \cite{wei2024entanglement} have proposed a hybrid quantum–classical algorithm to optimize entanglement rate, satellite assignment, and routing in large-scale QNs while satisfying minimum fidelity requirements of users.
By contrast, only a limited number of studies have addressed terrestrial FSO-based QNs \cite{10313846,ntanos2021availability,9573460}, with most of them concentrating on QKD over FSO channels and analyzing the impact of atmospheric turbulence, channel fading, and pointing errors on communication performance. In particular, the authors of \cite{chehimi2025reconfigurable} have investigated reconfigurable intelligent surface (RIS)-assisted FSO-based QNs to overcome line-of-sight blockages and optimize entanglement distribution under realistic channel impairments, while jointly optimizing RIS placement and entanglement generation rate allocation to satisfy users’ entanglement rate and fidelity requirements.

%Despite the advantages of FSO communication, the realization of FSO-based QNs often relies on centralized network architectures, which introduce several significant challenges. 
Most existing works on FSO-based QNs adopt a star-shaped topology, in which one or multiple central QBSs (either terrestrial QBSs or satellites) distribute entangled qubits to multiple connected QUs, while each QU is restricted to associating with only one QBS through single-connectivity (SC). Due to the inherently limited entanglement generation capacity of a QBS, this SC approach can constrain the network’s ability to satisfy the aggregated entanglement demands of all QUs. Furthermore, entangled quantum states typically exhibit short coherence times, which are affected by decoherence in quantum memory units at the QBS as well as losses and impairments during transmission over FSO channels, ultimately leading to reduced entanglement fidelity. Hence, assigning appropriate QBSs to each QU in a manner that preserves minimum fidelity requirements of QUs becomes a critical design challenge in FSO-based QNs.

To address these limitations, in this paper we develop a novel FSO-based dual-connectivity (DC) enabled QN that optimizes entanglement generation rates for heterogeneous QUs while jointly determining their association with multiple QBSs. Specifically, each QU is intelligently assigned to up to two QBSs based on its minimum fidelity and entanglement rate requirements, while accounting for entanglement generation capacity constraints of QBSs. This DC architecture alleviates the bottlenecks of SC systems, which often fail to satisfy the stringent rate requirements of quantum applications such as superdense coding and distributed quantum computing. Moreover, the use of multiple QBSs enhances network reliability and robustness, enabling more consistent entanglement distribution in the presence of decoherence and environmental impairments. 
The major contributions of this work are summarized as follows:
\begin{itemize}
    \item We propose an FSO-based QN in which QUs have heterogeneous fidelity and entanglement rate requirements, reflecting the demands of different quantum applications. Also, in the proposed DC architecture, each QU can simultaneously receive entangled qubits from up to two QBSs. In addition, we develop a comprehensive system model that captures the effects of FSO channel impairments, including atmospheric loss, turbulence, and pointing errors, as well as fidelity degradation caused by decoherence in quantum memory units.

    \item The entanglement rate maximization problem is formulated as a mixed-integer nonlinear programming (MINLP) problem, subject to per-QU minimum entanglement rate and fidelity constraints, QBS entanglement generation capacity limits, and QU association constraints. To address this problem, we adopt an alternating optimization (AO) framework that decomposes the original MINLP into two coupled subproblems corresponding to user association and entanglement generation rate allocation.

    \item Simulation results demonstrate the effectiveness of the proposed approach, showing that the AO method achieves near-optimal performance with optimality gaps of $5\%$--$19\%$, while the DC architecture improves the entanglement rate by $19.5\%$--$37\%$ compared with the SC scheme.
\end{itemize}

The rest of this paper is organized as follows. Section~\ref{sec:2} introduces the system model and assumptions. Section~\ref{sec:3} formulates the entanglement rate maximization problem and presents the proposed solution approach. Simulation results are discussed in Section~\ref{sec:4}, followed by conclusion in Section~\ref{Conclusion}.

\section{System Model and Assumptions}
\label{sec:2}
\subsection{Network Model}
\begin{figure}[t]
\centering
\includegraphics[width=0.67\linewidth]{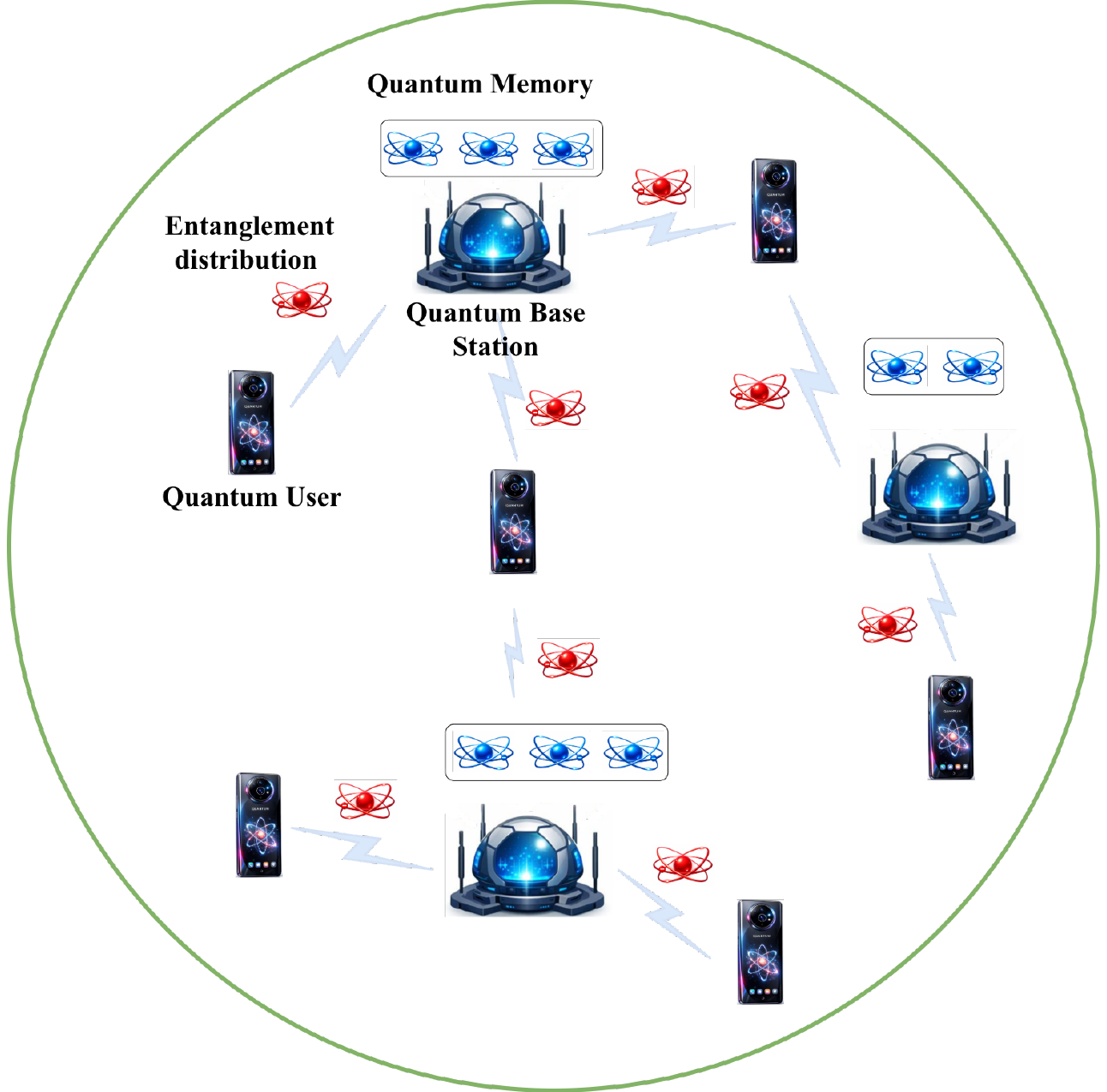}
\caption{A schematic view of the considered quantum network.}
\label{fig:network_model}
\end{figure}
\label{sec:fso_channel_model}
%We consider a dual-connectivity-enabled wireless quantum network in which two quantum base stations (QBSs) simultaneously establish entanglement with a set of $\mathcal{U}=\{1,2,\cdots, U\}$ quantum users (QUs). Each QBS generates entangled pairs, stores one half in its quantum memory, and transmits the other half to the QU via a free-space optical (FSO) channel. Both QBS consist of a quantum memory in each, with a maximum qubit capacity $R_n^{\mathrm{max}}$ where $n$ is the number of QBS.

We consider a DC-enabled wireless quantum network, as illustrated in Fig.~\ref{fig:network_model}, consisting of a set of QBSs $\mathcal{N}=\{1,2,\ldots,N\}$ and a set of heterogeneous QUs $\mathcal{U}=\{1,2,\ldots,U\}$. Each QBS, with a maximum entanglement generation capacity denoted by $R_n^{\mathrm{max}}$, generates entangled qubit pairs, stores one qubit of each pair in its local quantum memory, and transmits the other qubit to the associated QU over a FSO channel.  

The FSO channel gain from QBS~$n$ to QU~$j$ is affected by both large-scale and small-scale propagation effects of the FSO link. Specifically, the overall channel gain consists of an average gain component, which captures the effects of atmospheric loss and pointing error, and a fading gain component caused by atmospheric turbulence. In addition, the received signal strength is scaled by the responsivity of the receiver.

Atmospheric loss represents the deterministic attenuation of optical signal power caused by absorption and scattering due to atmospheric gases and weather conditions. It is modeled as follows: $g_{n,j}^l = 10^{-\kappa d_{n,j} / 10}$, 
where $\kappa$ is the weather-dependent attenuation coefficient and $d_{n,j}$ denotes the propagation distance of the FSO link between QBS~$n$ and QU~$j$ \cite{chehimi2025reconfigurable}.

Pointing error refers to the misalignment between the transmitted optical beam and the receiver aperture, which reduces the collected optical power.  Assuming a circular detection aperture, a Gaussian beam, and a zero-mean single-sided Gaussian pointing error displacement, the probability density function (PDF) of the pointing error gain $g_{n,j}^p$ is given by \cite{farid2007outage}:
\begin{equation}
\label{eq:7}
f_{g_{n,j}^{p}}\!\left(g_{n,j}^{p}\right)
= \frac{(\gamma_{n,j})^2}{A_{n,j}^{(\gamma_{n,j})^2}}
\left(g_{n,j}^{p}\right)^{(\gamma_{n,j})^2-1},
\qquad 0 \le g_{n,j}^{p} \le A_{n,j}
\end{equation}
where $A_{n,j} = [\mathrm{erf}(v_{n,j})]^2$ and
$v_{n,j} = \frac{\sqrt{\pi} a}{\sqrt{2} w_{n,j}}$ denotes the ratio between the receiver aperture radius $a$ and the beam width $w_{n,j}$. All receivers are assumed to have the same aperture radius $a$, and the beam width is given by $w_{n,j} = \theta_d d_{n,j}$. Moreover, $\gamma_{n,j} = \frac{(w_{eq}){n,j}}{2\sigma_s}$, where $\sigma_s$ denotes the standard deviation of the pointing-error displacement at the receiver and $(w_{eq})_{n,j}$ represents the equivalent beam width at the receiver.

Atmospheric turbulence in FSO channels arises from random refractive-index fluctuations in the atmosphere and is modeled using the Gamma–Gamma distribution. The PDF of the turbulence-induced fading gain $g_{n,j}^f$ is given by \cite{8601267,6786381}:
\begin{equation}
\label{eq:9}
\begin{aligned}
f_{g^{f}_{n,j}}\!\left(g^{f}_{n,j}\right)
&= \frac{2\left(\alpha_{n,j}\beta_{n,j}\right)^{\frac{\alpha_{n,j}+\beta_{n,j}}{2}}}
{\Gamma\!\left(\alpha_{n,j}\right)\Gamma\!\left(\beta_{n,j}\right)}
\left(g^{f}_{n,j}\right)^{\frac{\alpha_{n,j}+\beta_{n,j}}{2}-1}
\\
&\quad \times
K_{\alpha_{n,j}-\beta_{n,j}}
\!\left(2\sqrt{\alpha_{n,j}\beta_{n,j}g^{f}_{n,j}}\right),
\end{aligned}
\end{equation}
where $\Gamma(\cdot)$ denotes the Gamma function and $K_{\nu}(\cdot)$ denotes the modified Bessel function of the second kind. The parameters $\alpha_{n,j}$ and $\beta_{n,j}$ represent the effective numbers of small-scale and large-scale turbulence cells, respectively, and are expressed for far-field conditions as\cite{8601267}:
\begin{align}
\label{eq:10}
\alpha_{n,j} &= \left[
\exp\!\left(
\frac{0.49 (\sigma_R^2)_{n,j}}
{\left(1 + 1.11 (\sigma_R^2)_{n,j}^{12/5}\right)^{7/6}}
\right) - 1
\right]^{-1}, \\
\beta_{n,j} &= \left[
\exp\!\left(
\frac{0.51 (\sigma_R^2)_{n,j}}
{\left(1 + 0.69 (\sigma_R^2)_{n,j}^{12/5}\right)^{5/6}}
\right) - 1
\right]^{-1},
\end{align}
where $(\sigma_R^2)_{n,j} = 1.23 , C_n^2 , k^{7/6} , d_{n,j}^{11/6}$ is the Rytov variance \cite{al2001mathematical}, with $C_n^2$ denoting the refractive index structure parameter, $k = 2\pi/\lambda_{FSO}$ the optical wave number, and $\lambda_{FSO}$ the optical wavelength.
Accordingly, the overall FSO channel gain between QBS~$n$ and QU~$j$ is obtained from $g_{n,j} = \mathcal{R}   g_{n,j}^l   g_{n,j}^p   g_{n,j}^f$, where $\mathcal{R}$ denotes the receiver responsivity, assumed to be identical for all QUs.

\subsection{Entanglement Rate}
\label{sec:Fiedelity}
Let $r_{n,j}$ denote the entanglement generation rate (in pairs per second) of QBS~$n$ alloacted to QU~$j$, and let $s_{n,j}$ denote the success probability of transmitting an
entangled pair to QU~$j$. The entanglement rate at QU~$j$ from QBS~$n$ is given by $R_{n,j} = r_{n,j}\, s_{n,j}$. The probability of successful transmission depends on the FSO channel gain and can be defined as $s_{n,j} = \Pr \left( g_{n,j}\ge \eta_{\text{th}} \right)$, where $\eta_{\text{th}}$ is a predefined channel gain threshold required for
successful photon detection \cite{chehimi2025reconfigurable}. Using the Meijer G-function ($G^{x,y}_{z,m}$), a closed-form expression for this probability is obtained as follows \cite{chehimi2025reconfigurable},\cite{4432329}:
\begin{equation}
\begin{aligned}
\label{eq:success_prob}
s_{n,j}
&= 1 -
\frac{(\gamma_{n,j})^2}{\Gamma(\alpha_{n,j})\Gamma(\beta_{n,j})} \\
&\quad \times
G^{3,1}_{2,4}\!\left(
\frac{\alpha_{n,j}\beta_{n,j}\eta_{\text{th}}}
{A_{n,j} g^{l}_{n,j}\mathcal{R} }
\;\middle|\;
\begin{array}{c}
1,\, (\gamma_{n,j})^2 + 1 \\
(\gamma_{n,j})^2,\, \alpha_{n,j},\, \beta_{n,j},\, 0
\end{array}
\right).
\end{aligned}
\end{equation}

Furthermore, in the considered DC architecture, each QU can be associated with at most two QBSs, depending on its minimum entanglement rate and fidelity requirements, as well as the maximum capacity constraints of the QBSs. The QBS–QU association is represented by the binary variable $x_{n,j}$, where $x_{n,j}=1$ if QU $j$ is associated with QBS $n$, and $x_{n,j}=0$ otherwise.

\subsection{Fidelity}
In practical FSO systems, the entangled pairs shared between the two end nodes are not perfect Bell states, due to noise introduced by the channel and operations. In such situations, the quantum states of entangled pairs can be characterized as Werner states. The density matrix of the Werner state, denoted by $\rho_w$, is given by $\rho_w = W \ket{\psi^-}\bra{\psi^-} + \frac{1 - W}{4} I_4$,  
where $\ket{\psi^-} = \frac{1}{\sqrt{2}}(\ket{01} - \ket{10})$ and $I_4$ denotes the $4 \times 4$ identity matrix. Here, $W \in [0,1]$ is the Werner parameter that determines the degree of entanglement \cite{li2024narrowgap}.
The fidelity of the Werner state $\rho_w$ can be written as $F \triangleq \frac{3W + 1}{4}$. 

Let $F_{n,j}^0$ denote the fidelity of the quantum state at the channel output. In this paper, we adopt a deterministic distance-dependent model in which the channel fidelity decreases exponentially with the transmission distance, expressed as $F_{n,j}^0 = \exp(-\xi d_{n,j})$, where $\xi$ is an effective degradation coefficient that accounts for the combined effects of atmospheric attenuation, atmospheric turbulence, and pointing errors. As we consider the Werner-state representation and based on $F \triangleq \frac{3W + 1}{4}$, the channel output fidelity $F_{n,j}^0$ is converted to the corresponding Werner parameter, expressed as $W_{n,j}^0 = \frac{4F_{n,j}^{0} - 1}{3}$\cite{li2024narrowgap}.
In addition to channel impairments, quantum memory decoherence also affects the final fidelity. Memory decoherence for each user over a storage duration $\tau_{n,j}$, characterized by a coherence time $T_c$, is modeled as an exponential decay of the Werner parameter. Accordingly, the Werner parameter becomes $W_{n,j} = W_{n,j}^{0} 
\exp\!\left(-\frac{\tau_{n,j}}{T_c}\right)$, where $\tau_{n,j} = \frac{d_{n,j}}{c} + T_{\text{p}}$, in which $c$ is the speed of light and $T_{\text{p}}$ represents the processing delay~\cite{chehimi2025reconfigurable}. Consequently, the final fidelity of the entangled state delivered from QBS $n$ to QU $j$ is given by $F_{n,j} = \frac{3W_{n,j} + 1}{4}$.

\section{Problem Formulation and solution method}
\label{sec:3}
Our objective is to maximize the total entanglement rate of the network through the joint optimization of the QBS–QU association variables $x_{n,j}$ and the corresponding entanglement generation rates $r_{n,j}$. The optimization problem is formally stated as follows:
\begin{equation}\label{main-problem}
        \begin{aligned}
            &\displaystyle\max_{\substack{\{x_{n,j},r_{n,j}  \}	}}
            &&\sum\limits_{j\in\mathcal{U}} \sum\limits_{n\in\mathcal{N}}  x_{n,j} R_{n,j}  \\
            &\text{s.t.}
            &&\mathrm{C1}: \sum\limits_{n\in\mathcal{N}} x_{n,j} R_{n,j}  \geq R_j^{\mathrm{min}}, \quad \forall j\in\mathcal{U}, \\
            &
            &&\mathrm{C2}: F_{n,j}\color{black}  \geq  x_{n,j} F_j^{\mathrm{min}}, \quad \forall j\in\mathcal{U}, \quad \forall n\in\mathcal{N}, \\
            &
            &&\mathrm{C3}: \sum\limits_{j\in\mathcal{U}} x_{n,j} r_{n,j} \leq R_n^{\mathrm{max}}, \quad \forall n\in\mathcal{N}, \\
            &
            &&\mathrm{C4}: \sum\limits_{n\in\mathcal{N}}x_{n,j} \le 2, \quad \forall j\in\mathcal{U}, \\
            &
            &&\mathrm{C5}:x_{n,j} \in \{0,1\}, \quad \forall j \in \mathcal{U}, \forall n \in \mathcal{N}\\
            &
            &&\mathrm{C6}: r_{n,j} s_{n,j} \geq 0, \quad \forall j\in\mathcal{U}, \ \forall n\in\mathcal{N}.
        \end{aligned}
    \end{equation}    
Here, constraint~$\mathrm{C1}$ ensures that each user~$j$ is allocated an entanglement rate no less than its minimum required rate $R_j^{\mathrm{min}}$. Constraint~$\mathrm{C2}$ enforces the fidelity requirement by guaranteeing that the achieved entanglement fidelity $F_{n,j}$ for each user is not lower than the minimum acceptable fidelity $F_j^{\mathrm{min}}$. Constraint~$\mathrm{C3}$ limits the total entanglement generation rate allocated by QBS~$n$ to all its associated QUs to its maximum generation capacity $R_n^{\mathrm{max}}$. Moreover, constraint~$\mathrm{C4}$ guarantees that each QU~$j$ can be served by at most two QBSs. Finally, constraint~$\mathrm{C5}$ specifies the binary nature of the QBS--QU association variables, while constraint~$\mathrm{C6}$ enforces the non-negativity of the allocated entanglement generation rates.
    
The MINLP problem in~\eqref{main-problem} involves two coupled decision variables: the continuous variable $r_{n,j}$ and the binary variable $x_{n,j}$. To address this challenge, we adopt an AO approach~\cite{yi2014alternating}, in which problem~\eqref{main-problem} is decomposed into two subproblems, namely the entanglement generation rate optimization subproblem and the QBS--QU association subproblem.

\textbf{Entanglement generation rate optimization subproblem:}  Given a fixed QBS--QU association $x_{n,j}$, the entanglement generation rate optimization subproblem is formulated as follows \footnote{Note that $R_{n,j}$ is defined as $R_{n,j} = r_{n,j} s_{n,j}$; therefore, the expressions $R_{n,j}$ and $r_{n,j} s_{n,j}$ are used interchangeably throughout this paper.}:
\begin{equation}\label{subproblem-continuous}
\begin{aligned}
\max_{\{r_{n,j}\}} \quad 
& \sum_{j\in\mathcal{U}} \sum_{n\in\mathcal{N}} x_{n,j} r_{n,j} s_{n,j} \\[2pt]
\text{s.t.}\quad
& \text{C1, C3, C6}.
\end{aligned}
\end{equation}
Problem~\eqref{subproblem-continuous} is a linear optimization problem and can be solved optimally using standard optimization solvers such as the Gurobi optimizer~\cite{prajapati2025linear}.

\textbf{QBS--QU association subproblem:}
Given a fixed entanglement generation rate solution, the QBS--QU association subproblem is formulated as follows:
\begin{equation}\label{eq19}
\begin{aligned}
\max_{\{x_{n,j}\}} \quad
& \sum_{j\in\mathcal{U}} \sum_{n\in\mathcal{N}} x_{n,j} r_{n,j} s_{n,j} \\
\text{s.t.} \quad
& \mathrm{C1, C2, C3, C4, C5}.
\end{aligned}
\end{equation}
Problem~\eqref{eq19} is an integer linear programming problem that is, in general, not solvable in polynomial time. To address this challenge, following a similar approach to~\cite{proof}, we replace the binary constraint~$\mathrm{C5}$ in~\eqref{eq19} with the following equivalent set of constraints:
\begin{equation}\label{eq20}
\begin{aligned}
\mathrm{C5.1:}\quad 
& \sum_{j \in \mathcal{U}} \sum_{n \in \mathcal{N}} \left( x_{n,j} - \left(x_{n,j}\right)^2 \right) \leq 0, \\
\mathrm{C5.2:}\quad 
& 0 \leq x_{n,j} \leq 1, \quad \forall j \in \mathcal{U},\ \forall n \in \mathcal{N}.
\end{aligned}
\end{equation}

By substituting the binary constraint~$\mathrm{C5}$ in~\eqref{eq19} with constraints~$\mathrm{C5.1}$ and~$\mathrm{C5.2}$ in~\eqref{eq20}, problem~\eqref{eq19} is transformed into a non-convex optimization problem due to constraint~$\mathrm{C5.1}$. The following theorem is introduced to handle constraint~$\mathrm{C5.1}$.
\begin{theorem}
For a sufficiently large penalty factor $\lambda \gg 1$, problem~\eqref{eq19} is equivalent to the following penalized optimization problem:
\begin{equation}\label{eq21}
\begin{aligned}
\max_{\{x_{n,j}\}} \quad
& \sum_{j\in\mathcal{U}} \sum_{n\in\mathcal{N}} x_{n,j} r_{n,j} s_{n,j}
- \lambda \sum_{j \in \mathcal{U}} \sum_{n \in \mathcal{N}}
\left( x_{n,j} - \left(x_{n,j}\right)^2 \right) \\
\text{s.t.} \quad
& \mathrm{C1}, \mathrm{C2}, \mathrm{C3}, \mathrm{C4}, \mathrm{C5.2}.
\end{aligned}
\end{equation}
Here, $\lambda$ acts as a penalty factor that discourages fractional values of $x_{n,j}$, thereby enforcing binary solutions at optimality.
\end{theorem}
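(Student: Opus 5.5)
This is the standard exact-penalty argument, in the style of \cite{proof}. Write $f(\mathbf{x}) = \sum_{j}\sum_{n} x_{n,j} r_{n,j} s_{n,j}$ and $h(\mathbf{x}) = \sum_{j}\sum_{n}\big(x_{n,j}-(x_{n,j})^2\big)$. Let $\mathcal{F}$ be the feasible set defined by C1--C4 and C5.2. On $\mathcal{F}$ we have $h(\mathbf{x})\ge 0$, with equality if and only if every $x_{n,j}\in\{0,1\}$. Hence problem~\eqref{eq19} is $\max\{f(\mathbf{x}) : \mathbf{x}\in\mathcal{F},\ h(\mathbf{x})\le 0\}$. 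The plan is to pass through the Lagrangian dual of this reformulation and show that strong duality holds for all sufficiently large multipliers.

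\textbf{Step 1 (weak duality and monotonicity).} Define the dual function $\phi(\lambda) = \max_{\mathbf{x}\in\mathcal{F}} \{ f(\mathbf{x}) - \lambda h(\mathbf{x}) \}$, which is exactly the optimal value of~\eqref{eq21}. Let $p^\star$ be the optimal value of~\eqref{eq19}. Since $h\ge 0$ on $\mathcal{F}$, we get $\phi(\lambda)\ge p^\star$ for every $\lambda\ge 0$. Moreover, $\phi$ is nonincreasing in $\lambda$, and it is bounded below by $p^\star$ because $\mathcal{F}$ is compact and $f$ is bounded on it. So $\lim_{\lambda\to\infty}\phi(\lambda) = d^\star \ge p^\star$ exists.

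\textbf{Step 2 (the limit equals $p^\star$).} Take a sequence $\lambda_k\to\infty$ with maximizers $\mathbf{x}_k\in\mathcal{F}$ of~\eqref{eq21}. The bound $\lambda_k h(\mathbf{x}_k)\le f(\mathbf{x}_k)-p^\star\le \max_{\mathcal{F}} f - p^\star$ gives $h(\mathbf{x}_k)\to 0$. By compactness, a subsequence converges to some $\bar{\mathbf{x}}\in\mathcal{F}$ with $h(\bar{\mathbf{x}})=0$, so $\bar{\mathbf{x}}$ is binary and feasible for~\eqref{eq19}. Then $d^\star \le \limsup f(\mathbf{x}_k) = f(\bar{\mathbf{x}}) \le p^\star$, and hence $d^\star=p^\star$.

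\textbf{Step 3 (finite exactness; the main obstacle).} Step~2 only gives equivalence in the limit. I would next show that there is a finite $\lambda_0$ such that, for every $\lambda>\lambda_0$, every maximizer of~\eqref{eq21} is binary. I would first try a local argument near binary points. Close to a vertex $\bar{\mathbf{x}}$, each coordinate satisfies $x-x^2 \ge \tfrac12\min(x,1-x)$, so $h(\mathbf{x})\ge \tfrac12\|\mathbf{x}-\bar{\mathbf{x}}\|_1$. Meanwhile $f$ is linear, hence Lipschitz with constant $L=\max_{n,j} r_{n,j}s_{n,j}$. Therefore for $\lambda > 2L$, moving a fractional point back to the nearest binary feasible point strictly increases the penalized objective.

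The delicate part is that the rounded point must itself remain in $\mathcal{F}$, since rounding may violate C1 or C3. Two routes handle this.
\begin{itemize}
\item One option is to argue that $f-\lambda h$ is strictly convex for $\lambda>0$, because $-\lambda h = \lambda\sum(x^2-x)$. A strictly convex function over the polytope $\mathcal{F}$ attains its maximum at a vertex. Then, using the structure of C1, C3 and C4 with fixed $r_{n,j}$, I would argue that the relevant vertices are binary. If some vertex is fractional, I would combine this with Step~2: any fractional vertex $\mathbf{v}$ has $h(\mathbf{v})\ge \delta>0$, where $\delta$ is taken over the finitely many fractional vertices. So for $\lambda > (\max_{\mathcal{F}} f - p^\star)/\delta$, no fractional vertex can beat the binary optimum.
\item This vertex argument gives an explicit $\lambda_0$, and I expect it to be the cleanest route.
\end{itemize}

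\textbf{Conclusion.} For $\lambda>\lambda_0$, the optimum of~\eqref{eq21} is attained at a binary point, where $h=0$. Its value therefore coincides with $p^\star$, and the optimal solutions of the two problems coincide. This establishes the claimed equivalence for $\lambda\gg 1$.
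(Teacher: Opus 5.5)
Your proof is correct, but its decisive step is not the argument the paper relies on. The paper gives no proof of its own and defers to \cite{proof}. As is standard for this penalty, that result is presumably obtained by Lagrangian duality. There, $\lambda$ is read as the multiplier of C5.1, and weak duality and monotonicity of $\phi(\lambda)$ are noted exactly as in your Step~1. One then asserts that at some finite $\lambda_0$ the penalized maximizer has $h=0$, so strong duality holds and monotonicity gives $\phi(\lambda)=p^\star$ for all $\lambda\ge\lambda_0$. That existence step is typically justified by contradiction (``otherwise $\phi(\lambda)\to-\infty$''), which by itself does not exclude $h(\mathbf{x}_\lambda)\downarrow 0$.

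Your Step~2 proves the limit statement rigorously. Your Step~3 supplies finite exactness by a different mechanism. The penalized objective has Hessian $2\lambda I\succ 0$, so every maximizer over the polytope $\mathcal{F}$ is an extreme point. The finitely many fractional vertices satisfy $h\ge\delta>0$, which gives the explicit threshold $\lambda_0=(\max_{\mathcal{F}}f-p^\star)/\delta$. This uses exactly the structure present here: the objective and C1--C4 are linear in $\mathbf{x}$ for fixed $r_{n,j}$. In return you get a concrete $\lambda_0$ and the stronger conclusion that every maximizer of \eqref{eq21} is binary. So the solution sets coincide, not only the optimal values.

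Three cleanups are needed.
\begin{itemize}
\item The claim that the structure of C1, C3, C4 makes the relevant vertices binary is false in general. C1 and C3 have arbitrary real coefficients, and with continuous $x_{n,j}$, constraint C2 reads $x_{n,j}\le F_{n,j}/F_j^{\min}$, a fractional bound whenever $F_{n,j}<F_j^{\min}$. Drop this claim, since the $\delta$-argument needs no integrality.
\item State explicitly that strict convexity makes \emph{every} maximizer extreme, not merely that some vertex attains the maximum.
\item The only ingredient from Steps~1--2 you actually use is $p^\star>-\infty$, i.e.\ feasibility of \eqref{eq19}. The theorem leaves this implicit, and you should assume it openly.
\end{itemize}

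Incidentally, your abandoned local argument also closes.
\begin{itemize}
\item Since $\mathcal{F}$ is closed, some $\epsilon>0$ bounds from below the $\ell_1$-distance from $\mathcal{F}$ to every infeasible binary point.
\item Every maximizer satisfies $h(\mathbf{x}_\lambda)\le(\max_{\mathcal{F}}f-p^\star)/\lambda$, and the componentwise rounding $\bar{\mathbf{x}}$ satisfies $\|\mathbf{x}-\bar{\mathbf{x}}\|_1\le 2h(\mathbf{x})$. Hence the rounded maximizer is feasible once $\lambda>2(\max_{\mathcal{F}}f-p^\star)/\epsilon$.
\item For $\lambda>\max\{2L,\,2(\max_{\mathcal{F}}f-p^\star)/\epsilon\}$, your Lipschitz estimate then rules out fractional maximizers.
\end{itemize}
That variant needs only a Lipschitz objective and a closed feasible set, not linearity or polyhedrality.
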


\begin{proof}
The proof is provided in~\cite{proof}.
\end{proof}

Let $ f(\boldsymbol{x})=\sum_{j\in\mathcal{U}} \sum_{n\in\mathcal{N}}  x_{n,j} r_{n,j} s_{n,j}- \lambda \sum\limits_{j \in \mathcal{U}}\sum\limits_{n\in\mathcal{N}} x_{n,j}$ and  $ g(\boldsymbol{x})=-\lambda\sum_{j\in\mathcal{U}}\sum_{j\in\mathcal{N}}\left({x_{n,j}}\right)^2 $. The objective function of problem~\eqref{eq21} can then be expressed as  \( f(\boldsymbol{x}) - g(\boldsymbol{x}) \), where both \( f(\boldsymbol{x}) \) and \( g(\boldsymbol{x}) \) are concave functions. Consequently, problem~\eqref{eq21} belongs to the class of difference‑of-convex programming problems~\cite{sun2016majorization}.

The majorization–minimization (MM) method is a well‑established technique to solve difference‑of‑convex problems by iteratively approximating them with a sequence of convex problems. In particular, a common MM approach employs a first‑order Taylor approximation to linearize the non‑convex component. Specifically, for a differentiable function \(h(\boldsymbol{y})\), its first‑order Taylor expansion around a feasible point \(\overline{\boldsymbol{y}}\) is given by:
$
h(\boldsymbol{y}) \approx h(\overline{\boldsymbol{y}}) + \nabla h(\overline{\boldsymbol{y}})^\top
\left(\boldsymbol{y} - \overline{\boldsymbol{y}}\right)
$, 
which provides a global affine approximation of \(h(\boldsymbol{y})\).

Accordingly, to convexify the objective function of problem~\eqref{eq21}, the non‑convex term
\( g(\boldsymbol{x}) = -\lambda \sum_{j \in \mathcal{U}} \sum_{n \in \mathcal{N}} (x_{n,j})^2 \)
is approximated using its first‑order Taylor expansion around the association vector obtained in the previous iteration, denoted by \(\boldsymbol{x}(t-1)\), as follows $g(\boldsymbol{x}) \approx g\!\left(\boldsymbol{x}(t-1)\right)
+ \nabla g\!\left(\boldsymbol{x}(t-1)\right)^\top
\left(\boldsymbol{x} - \boldsymbol{x}(t-1)\right)$. 
As a result, problem~\eqref{eq21} can be reformulated as:
\begin{equation}\label{subproblem-binary-re-taylor}
\begin{aligned}
\max_{\{x_{n,j}\}}\quad
& \sum_{j\in\mathcal{U}} \sum_{n\in\mathcal{N}} x_{n,j} r_{n,j} s_{n,j}
- \lambda \sum_{j \in \mathcal{U}} \sum_{n \in \mathcal{N}} x_{n,j} \\
& \quad + \lambda \sum_{j \in \mathcal{U}} \sum_{n \in \mathcal{N}}
\left[ 2 x_{n,j} x_{n,j}(t-1)
- \left(x_{n,j}(t-1)\right)^2 \right] \\
\text{s.t.}\quad
& \mathrm{C1, C2, C3, C4, C5.2}.
\end{aligned}
\end{equation}
Problem~\eqref{subproblem-binary-re-taylor} is a linear optimization problem, and thus can be solved efficiently in polynomial time using standard solvers such as the Gurobi optimizer \cite{prajapati2025linear}.

\section{Simulation Results}
\label{sec:4}
We evaluate the performance of the proposed DC architecture and AO method by specifying the ranges of the optimization parameters. We consider a square area of $600~\mathrm{m} \times 600~\mathrm{m}$ in which QBSs and QUs are randomly deployed. The distance of each QBS--QU link, denoted by $d_{n,j}$, is assumed to lie within the range $[150,550]~\mathrm{m}$. Unless stated otherwise, the simulation parameters are defined as follows.  
The maximum entanglement generation capacity of each QBS, $R_n^{\mathrm{max}}$, is randomly drawn from the interval $[5\times10^{6},\,1\times10^{7}]$ pairs/s. To account for heterogeneous quantum application requirements, the minimum entanglement rate demand of each QU, $R_j^{\mathrm{min}}$, varies across QUs and typically lies in the range $[2\times10^{3},\,4\times10^{3}]$ pairs/s. Moreover, the minimum required entanglement fidelity for each QU, denoted by $F_j^{\mathrm{min}}$, is independently drawn from a uniform distribution over the interval $[0.8,0.95]$. The remaining simulation parameters are summarized in Table~\ref{Simulation}. %and are adopted from~\cite{chehimi2025reconfigurable,9573460,farid2007outage,4432329}.

Fig.~\ref{fig:ao_conv} depicts the convergence behavior of the proposed AO algorithm for a single network snapshot. As can be seen, the AO method converges within $8$ iterations, demonstrating fast convergence. 

In the following, we compare the total entanglement rate achieved by the proposed DC‑based architecture with that of a conventional SC approach, as well as the performance of the proposed AO method with the optimal solution obtained using the Gurobi solver. The comparison is conducted under varying numbers of QUs, different numbers of QBSs, and different ranges of minimum entanglement rate requirements. All simulation results are obtained by averaging over $1000$ independent network snapshots, in which user locations are randomly regenerated across snapshots.

\begin{table}[t]
\caption{Summary of parameters used in simulations \cite{chehimi2025reconfigurable,9573460,farid2007outage,4432329}}
\label{Simulation}
\centering
\footnotesize
\renewcommand{\arraystretch}{1.05}
\setlength{\tabcolsep}{3pt}
\begin{tabular}{|c|p{3.7cm}|c|}
\hline
\textbf{Parameter} & \textbf{Description} & \textbf{Value} \\
\hline
$\lambda_{\mathrm{FSO}}$ & Wavelength & 1550 nm \\
\hline
$\kappa$ & Weather-dependent attenuation coefficient of the FSO link & 0.43 dB/km \\
\hline
$a$ & Aperture radius & 0.25 m \\
\hline
$\sigma_s$ & Std. dev. of the pointing-error displacement & 1 mrad \\
\hline
$\theta_d$ & Beam divergence angle & 8 mrad \\
\hline
$R$ & Receiver responsivity & 0.95 \\
\hline
$C_n^2$ & Refractive index structure constant (turbulence strength) & $5 \times 10^{-14}\ \mathrm{m}^{-2/3}$ \\
\hline
$c$ & Speed of light & $3 \times 10^8\ \mathrm{m}\mathrm{s}^{-1}$ \\
\hline
$T_c$ & Quantum memory coherence time & 2.43 ms \\
\hline
$T_p$ & Average qubit processing time & 4 $\mu\text{s}$ \\
\hline
$\eta_{\mathrm{th}}$ & FSO channel threshold & 0.05 \\
\hline
$\xi$ & Channel degradation coefficient & 0.2 dB/km \\
\hline
\end{tabular}
\end{table}

\begin{figure}[t!]
    \centering
    \includegraphics[width=0.9\linewidth]{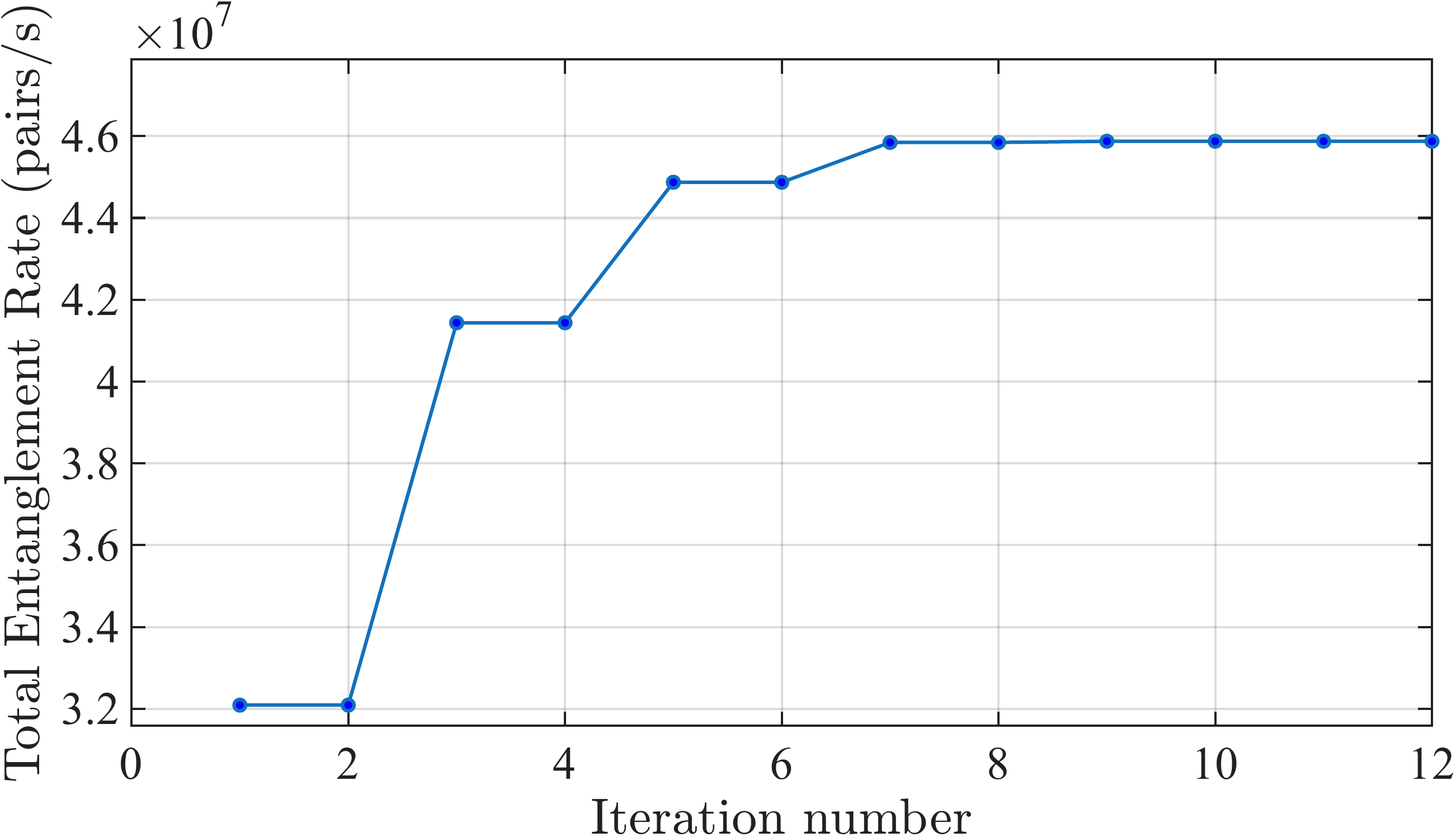}
    \caption{AO convergence for $N=10$ and $U=20$.}
    \label{fig:ao_conv}
\end{figure}

\begin{figure}
\centering
    \includegraphics[width=0.9\linewidth]{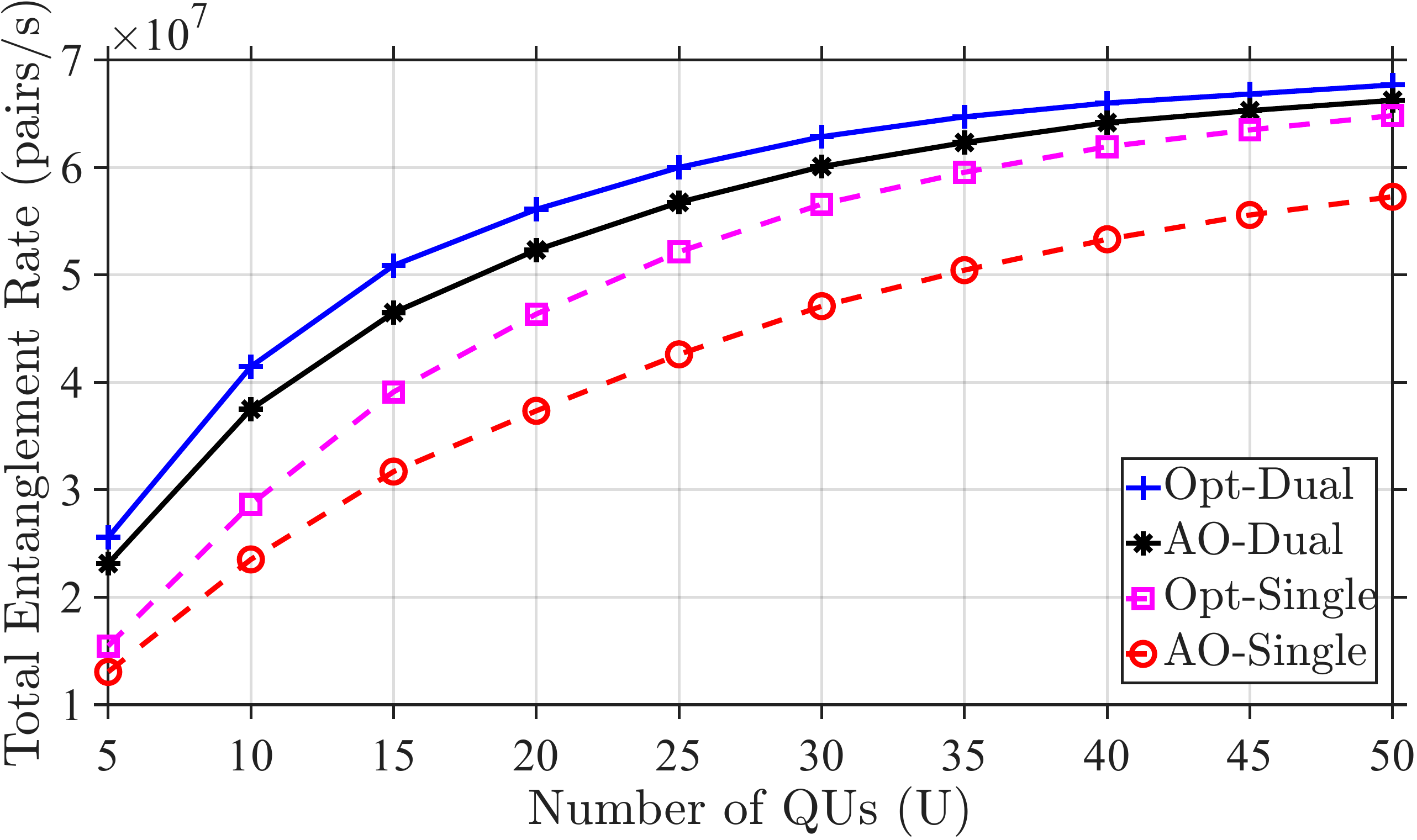}
\caption{Total entanglement rate vs. number of QUs when $N=10$.}
    \label{users}
\end{figure}
\begin{figure}
\centering
    \includegraphics[width=0.9\linewidth]{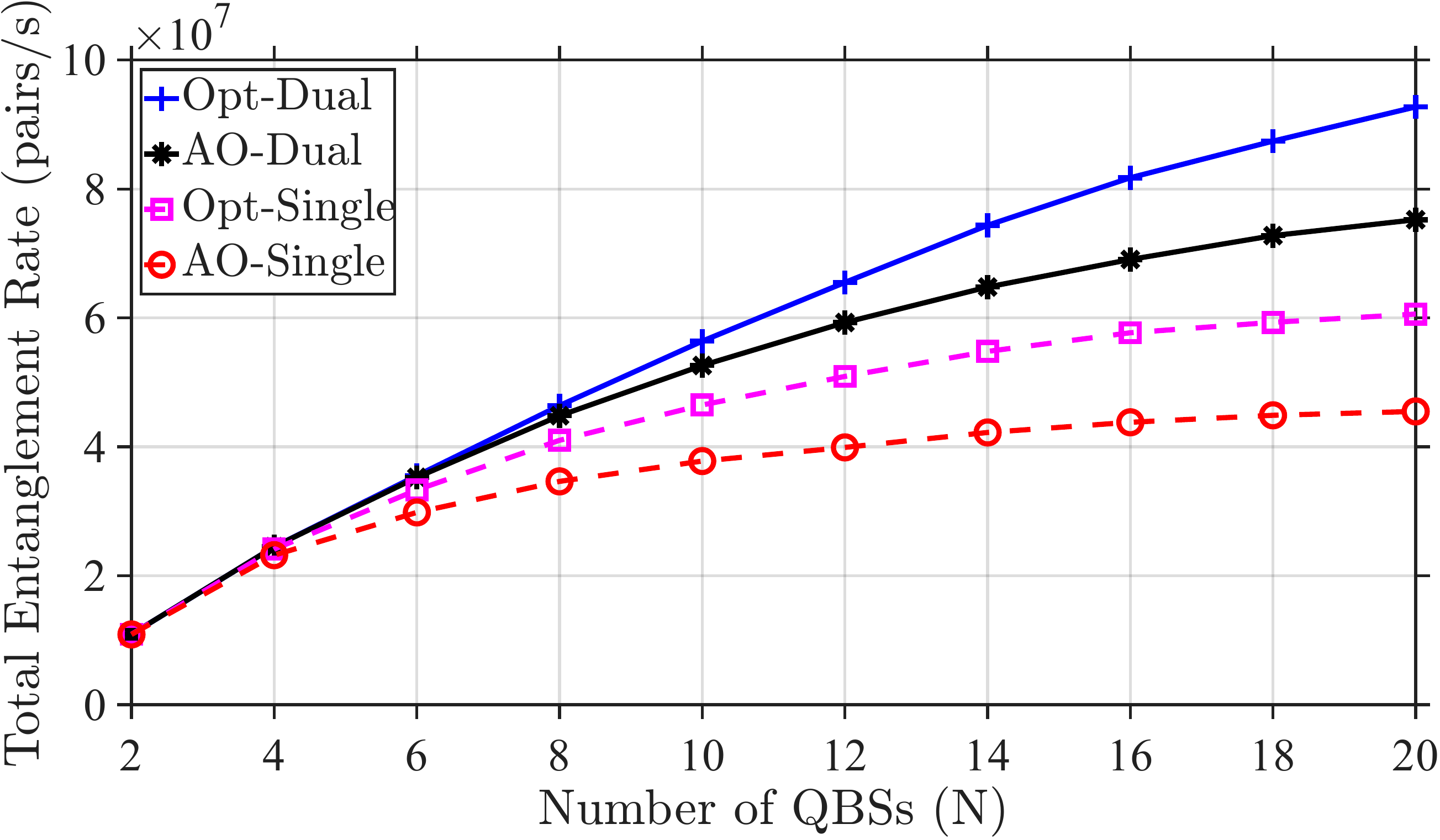}
\caption{Total entanglement rate vs.  number of QBSs  when $U=20$.}
\label{bs}
\end{figure}

\begin{figure}[t!]
\centering
    \includegraphics[width=0.9\linewidth]{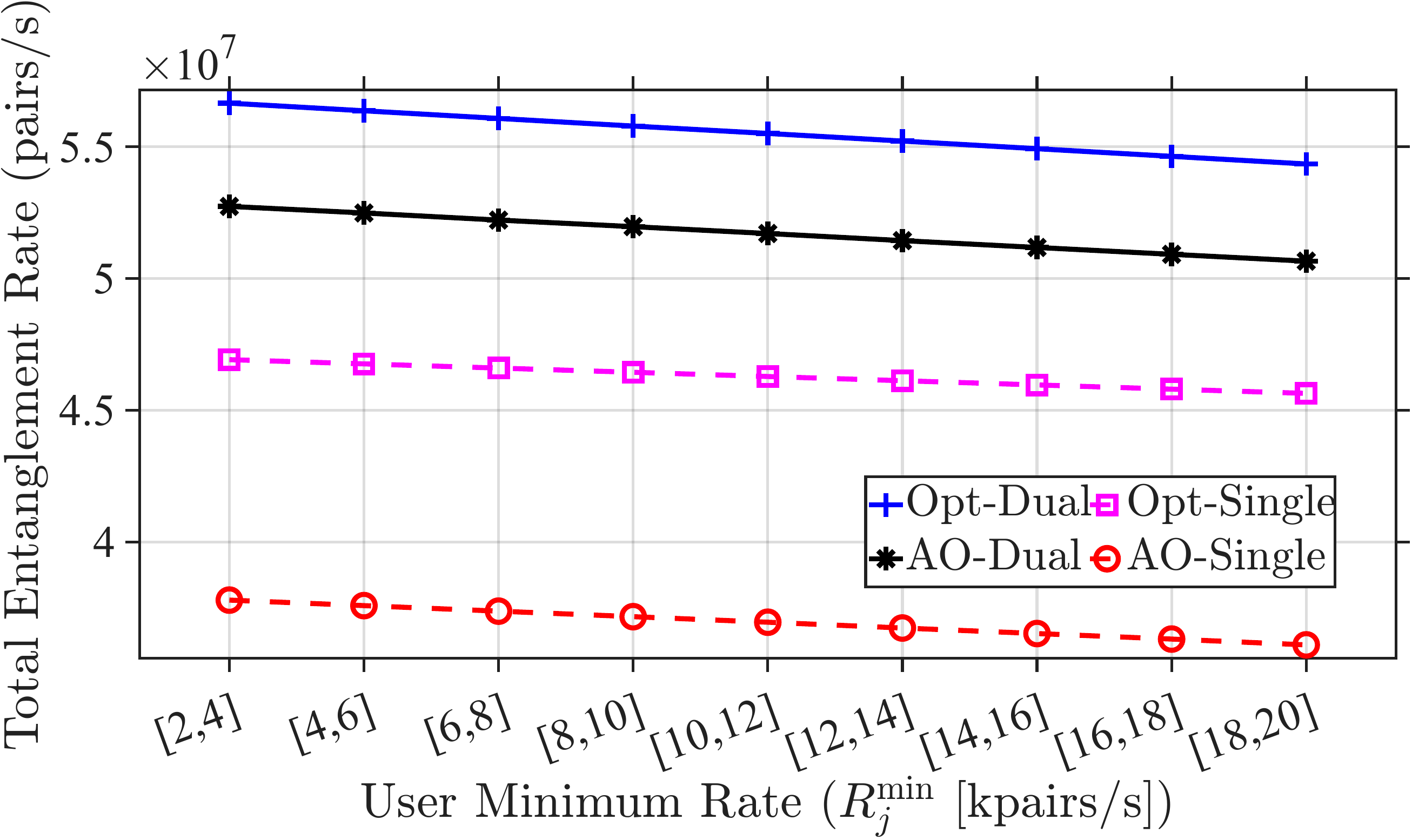}
\caption{Total entanglement rate vs. different $R_j^{\mathrm{min}}$ ranges when $N=10$ and $U=20$.}
    \label{rmin}
\end{figure}

%We evaluate the performance of the proposed DC-enabled QN by comparing it with an SC QN under identical system parameters. We also benchmark the proposed AO algorithm against the optimal solution obtained via Gurobi. As shown in Fig.~\ref{fig:three_plots}, we consider four cases for each experiment: 1) \textit{Optimal} DC, 2) AO-based DC, 3) \textit{Optimal} SC, and 4) AO-based SC.

%Fig.~\ref{fig:users} shows the entanglement rate versus the number of QUs for $N = 10$. The rate increases with more users due to better BS capacity utilization, but gains diminish beyond $U = 40$ as resources are mostly used to meet minimum rate requirements. The DC curves consistently outperform the SC curves, demonstrating the advantage of the proposed DC QN.

Fig.~\ref{users} illustrates how the total entanglement rate varies with the number of QUs when $N = 10$. It can be observed that the entanglement rate increases with the number of QUs across all curves, due to improved utilization of QBS capacity. However, beyond $U = 40$, the performance gain diminishes as QBS resources are increasingly consumed to satisfy the minimum entanglement rate constraints, leaving limited capacity for further entanglement rate maximization. Importantly, the DC curves consistently outperform the SC curves, demonstrating the superiority of the proposed DC-based network.

In Fig.~\ref{bs}, we depict the impact of the number of QBSs on the total entanglement rate when $U = 20$. As can be seen at $N = 2$, all four curves overlap, since the two available QBSs must serve all QUs, effectively reducing the DC architecture to the SC case. As $N$ increases, the curves gradually diverge, resulting in higher achievable entanglement rates. This improvement stems from the increased availability of QBS resources, which alleviates congestion and enables a more flexible allocation of entanglement generation capacity among QUs. Moreover, the DC-based schemes consistently outperform the SC-based schemes because each QU can be simultaneously served by up to two QBSs, allowing for load balancing and more efficient utilization of the  QBS capacities. %As a result, the DC curves continue to increase with $N$. In contrast, the SC curves exhibit saturation when $N \approx U$, since adding more QBSs beyond this point does not further enhance the entanglement rate under the single-connectivity constraint.

Fig.~\ref{rmin} demonstrates the variations of the total entanglement rate with respect to different ranges of minimum rate requirements for fixed values of $N = 10$ and $U = 20$. It can be observed that increasing the minimum entanglement rate requirements leads to a consistent reduction in the achieved entanglement rate, as a larger portion of the available QBS capacity must be reserved to satisfy these constraints. Moreover, DC configurations consistently achieve higher entanglement rates than SC configurations, owing to the ability of each QU to connect to up to two QBSs.

Figs~\ref{users}, \ref{bs}, and~\ref{rmin} show that the proposed AO method achieves competitive, near‑optimal performance, with average optimality gaps ranging from $5\%$ to $19\%$ across different scenarios, indicating an effective trade-off between solution accuracy and computational complexity. In addition, these figures confirm that the proposed DC architecture improves the achievable entanglement rate by $19.5\%$ to $37\%$ on average, compared with the SC approach, highlighting the DC-based strategy as a promising architecture for quantum networks.

\section{Conclusion}
\label{Conclusion}
We  have investigated entanglement rate maximization in DC-enabled wireless quantum networks with multiple QBSs. We formulate a joint QBS–QU association and entanglement generation rate allocation problem as an MINLP, explicitly accounting for QBS generation capacity constraints, QU fidelity and minimum entanglement rate requirements, and impairments of the FSO channel. To efficiently address this challenging problem, we propose an AO framework that decomposes the original MINLP into two tractable linear subproblems. Simulation results demonstrate that the DC architecture consistently outperforms the SC counterpart, while the proposed AO approach achieves near-optimal performance with significantly reduced computational complexity.

\bibliographystyle{IEEEtran}
\bibliography{ref}
\end{document}